\documentclass[10pt,journal,epsfig]{IEEEtran}

\usepackage[dvips]{graphicx}
\usepackage{graphicx}
\usepackage{amssymb}
\usepackage{cite}
\usepackage{subfigure}
\usepackage{amsmath}
\usepackage{algorithm}
\usepackage{algorithmic}
\usepackage{multirow}
\usepackage{amsthm}

\allowdisplaybreaks[4]
\usepackage{xcolor}

\begin{document}

\title{Jamming-Resistant UAV Communications: A Multichannel-Aided Approach}
\author{Bin Wang, Jun Fang,~\IEEEmembership{Senior Member,~IEEE}, Jieru Du, and Shihai Shao
\thanks{Bin Wang, Jun Fang, Jieru Du and Shihai Shao are with the National Key
Laboratory of Wireless Communications, University of
Electronic Science and Technology of China, Chengdu 611731, China,
Email: JunFang@uestc.edu.cn}
}

\maketitle

\begin{abstract}
Jamming cancellation is essential to reliable unmanned autonomous
vehicle (AAV) communications in the presence of malicious jammers.
In this paper, we develop a practical multichannel-aided
jamming cancellation method to realize secure AAV communications.
The proposed method is capable of simultaneously achieving
timing/frequency synchronization as well as jamming cancellation.
More importantly, our method does not need the signal's/jammer's
channel state information. It only utilizes
the knowledge of the legitimate sender's preamble sequence
that is available in existing communication protocols. We
also analyze the length of the preamble sequence required for
successful synchronization and signal recovery. Experimental
results on the built hardware platform show that, with a
two-antenna receiver, the proposed method can successfully
decode the signal of interest even when the jamming signal
is $40$dB stronger than the communication signal.
\end{abstract}

\begin{keywords}
AAV communication, jamming cancellation,
time and carrier frequency synchronization.
\end{keywords}

%Despite the fact that
%the AAV communications usually enjoy a favorable line-of-sight
%(LoS) channel condition, the path loss is inevitably high
%due to the large distance between the AAV and the BS.
%The high path loss makes the wireless link vulnerable
%to interferences and jamming.

\section{Introduction}
Autonomous aerial vehicle (AAV) is a flexible, efficient
and multi-functional aircraft that has been widely used
in various tasks such as aerial photography, agriculture,
civil rescue as well as military uses. AAVs are usually
coordinated by ground or space base stations (BS). The
control instructions are transmitted from the BS to the distant
AAV through wireless links. To enable the normal
operation of AAV, it is crucial to develop techniques
maintaining reliable communications between the BS and
the AAV in the presence of strong jamming signals
\cite{LuXiao20,LvXiao23}.

The most common anti-jamming techniques include spread
spectrum techniques, such as frequency hopping spread
spectrum (FHSS) \cite{Torrieri00} and direct sequence
spread spectrum (DSSS) \cite{Flikkema97}. Nevertheless,
the spread spectrum techniques can only support low data
rate communications. In this work, we focus our study on
multichannel-assisted jamming suppression techniques.
Interference/jamming suppression based on multichannel
signal processing is a topic that has been studied for
many years, and a variety of algorithms have been proposed
\cite{LiStoica05,GollakotaAdib11,YanZeng16,MartiKolle23,
DoBjornson17,DoBjornson17a,MartiArquint24,ZengCao17,PirayeshSangdeh20,PirayeshSangdeh21}.
Among them, the most renowned technique is adaptive beamforming
(ABF) \cite{LiStoica05} which cancels the jamming signals
by adaptively forming a beam-pattern that rejects signals
from undesirable directions. Nevertheless, ABF requires the
impinging direction of the desired signal, which is difficult
to estimate in urban environments where multipath propagation
is significant. Different from ABF, some other multichannel-assisted
anti-jamming techniques were developed by utilizing the
knowledge of the legitimate channel or the pilot sequence. Specifically, the work
\cite{GollakotaAdib11,YanZeng16} proposed to first
estimate the legitimate channel during the jamming reaction
period in which no jamming signal is sent, and then,
based on the knowledge of the legitimate channel, they
extract the information of the jamming channel from
the received signal's covariance matrix. This scheme, however,
faces difficulties when there is no jamming reaction period
or the reaction period is very short. Another class of
methods \cite{MartiKolle23,DoBjornson17,DoBjornson17a,MartiArquint24},
referred to as semi-blind source separation techniques,
utilize the known pilot sequence to cancel the desired
signal component from the received signals, thus enabling
to recover the jamming channel subspace. This semi-blind
approach, however, requires perfect time and carrier synchronization
to remove the signal component, while time and carrier
synchronization itself is very challenging in the presence
of strong jamming signals. To remedy this issue, the work
\cite{ZengCao17,PirayeshSangdeh20,PirayeshSangdeh21} proposed a jamming-resilient synchronization
module to perform time and frequency synchronization. The
basic idea is to first use the minimum eigenvector (i.e.
the eigenvector associated with the minimum eigenvalue)
of the received signal's covariance matrix as a spatial
filter to suppress the jamming signals, and then apply
conventional synchronization schemes to perform time/frequency
synchronization. Such a constructed spatial filter,
however, has the tendency to suppress the desired signal
as well, thus leading to an unsatisfactory performance.

To overcome the drawbacks of existing methods, we, in our
paper, propose a novel preamble-assisted multichannel
signal processing method which can simultaneously achieve
time/frequency synchronization as well as jamming cancellation.
To our best knowledge, this is the first work that utilizes
the preamble sequence for joint time/frequency synchronization
and jamming cancellation. Unlike \cite{GollakotaAdib11,YanZeng16},
our proposed algorithm does not need to estimate the legitimate channel. It only utilizes the preamble
sequence that is periodically transmitted by the BS. Another
contribution of our work lies in that we provide a rigorous
theoretical justification for the proposed method, and analyze
the minimum length of the preamble sequence that is required
for successful synchronization and signal recovery. Experimental
results on both simulated data and universal software radio
peripheral (USRP) platform show the superiority of the proposed
method over state-of-the-art anti-jamming methods.

\section{System Model and Problem Formulation}
We consider a downlink AAV communication scenario where the
desired communication signal $s(t)$ is transmitted from a single-antenna
BS to the AAV. The signal is interfered by a number of strong
co-channel jamming signals, denoted as $\{i_k(t)\}_{k=1}^{K}$.
An $N$-antenna receiver is employed at the AAV to receive
and decode the desired signal. We assume that $N>K$. A
narrowband model is considered in this paper, in which
case the signal received by the AAV can be expressed as
\begin{align}
\textstyle\boldsymbol{y}(t)=e^{j2\pi\delta_{f}t}
\boldsymbol{h}s(t-\tau)+\sum_{k=1}^{K}\boldsymbol{g}_k i_k(t)
 + \boldsymbol{n}(t)
\label{receive-signal-model}
\end{align}
{\color{blue}where $\boldsymbol{y}(t)\in\mathbb{C}^{N}$ denotes the
received signal at time instant $t T_s$, $T_s$ is the sampling interval (which is omitted
for sake of notational convenience)}, $\boldsymbol{h}\in\mathbb{C}^{N}$
represents the channel between the BS and the AAV,
$\boldsymbol{g}_k\in\mathbb{C}^{N}$ stands for the channel
between the $k$th jammer and the AAV, $\boldsymbol{n}(t)\in\mathbb{C}^{N}$
is the additive white Gaussian noise, the term $e^{j2\pi\delta_{f}t}$
is used to characterize the carrier frequency offset (CFO) between
the BS and the AAV, and we use $s(t-\tau)$ to account for the
unknown timing offset between the BS and the AAV. {\color{blue}Note that here
$s(t)$ is the digitally modulated baseband signal such as QPSK
(quadrature phase shift keying) or QAM (quadrature amplitude
modulation).} As compared with indoor or ground environments,
the channel between the AAV and the BS is more likely to be
dominated by the line-of-sight (LoS) path component, whereas
the indoor or ground channels may consist of a large
number of multi-path components. Nevertheless, since
we do not rely on any specific structure
on the channels $\boldsymbol{h}$ and $\{\boldsymbol{g}_k\}_{k=1}^K$,
our model (\ref{receive-signal-model}) is general and
applies to both AAV and ground communication scenarios.

%Here the CFO not only accounts for the frequency
%mismatch in the transmitter and the receiver oscillators; it can
%also model the Doppler effect which is non-negligible in AAV communications.

The following basic assumptions are adopted in this
paper:
\begin{itemize}
\item[A1] The channels $\boldsymbol{h}$ and $\{\boldsymbol{g}_k\}_{k=1}^K$
are unknown to the receiver. Besides, they are linearly
independent of each other and keep invariant within each channel coherence block.
\item[A2] The legitimate signal $s(t)$ and the jamming signals
$\{i_k(t)\}_{k=1}^{K}$ are random signals which are statistically
independent of each other.
\end{itemize}

%to achieve jamming cancellation under unknown time
%and frequency offsets

In this work, we assume that the BS periodically sends a
preamble sequence $\{s(t), \ t=1,\ldots,T\}$ that is known
to the AAV. Note that most communication protocols,
e.g. IEEE 802.11ac or 802.11n \cite{HayatYanmaz15} which are
widely employed in AAV communications, include periodically
transmitted preamble sequences in their transmission protocol
in order to perform time and carrier frequency synchronization.
%Therefore our proposed method is perfectly compatible with
%existing communication protocols.

Specifically, we use the preamble sequence as a reference signal
to design a CFO-compensated spatia-temporal filter
$\{e^{-j2\pi\omega (t+l)}\boldsymbol{w}_{l}^{H}\}_{l=0}^{L-1}$
such that the output of the filter is as close to the
reference signal as possible:
\begin{align}
\textstyle \min\limits_{\omega,\{\boldsymbol{w}_l\}_{l=0}^{L-1}}
\sum\limits_{t=1}^{T}\big|s(t)-\sum\limits_{l=0}^{L-1}e^{-j2\pi\omega(t+l)
}\boldsymbol{w}_{l}^{H}\boldsymbol{y}(t+l)\big|^2 \label{opt1}
\end{align}
where $(\cdot)^{H}$ denotes the conjugate transpose of a vector
or a matrix, and $L$ is the order of the filter satisfying $L>\tau$.
Note that we need to resort to the time-varying phase term
$e^{-j2\pi\omega (t+l)}$ to compensate for the CFO. Simply using
the time-independent coefficients $\{\boldsymbol{w}_l\}_{l=0}^{L-1}$
cannot correct the CFO and thus cannot filter out the desired signal.
Without loss of generality, we assume $\{s(t), \ \forall \text{$t\leq 0$ or $t>T$}\}$
are data symbols that are unknown to the receiver.

%In this paper, we aim to
%answer the following question: \emph{what is the minimum
%length of the preamble sequence that is required to design
%an effective filter, and how to design this filter?} Note
%that the filter coefficients need to be updated for each block of channel coherence
%time. Hence a short preamble sequence is often preferred as
%it can rapidly adapt to the fast-varying channel environment.

Our objective is to design a spatia-temporal filter to
successfully suppress the jamming signals and recover
the desired communication signal.
Define
\begin{align}
&\boldsymbol{\vec{y}}_{\omega}(t) \triangleq [e^{j2\pi \omega t
}\boldsymbol{y}^H(t) \phantom{0}\ldots\phantom{0}e^{j2\pi \omega
(t+L-1) }\boldsymbol{y}^H(t+L-1)]^H
\nonumber\\
&\boldsymbol{w}\triangleq
[\boldsymbol{w}_0^H\phantom{0}\ldots\phantom{0}\boldsymbol{w}_{L-1}^H]^H\in\mathbb{C}^{NL}
\end{align}
{\color{blue}where
$\boldsymbol{w}_l$ is the $l$th tap's filter coefficients expressed into a vector form.}
The optimization problem (\ref{opt1}) can be re-expressed as
\begin{align}
\textstyle \min_{\boldsymbol{w},\omega} \
\sum_{t=1}^{T}\big|s^*(t)-\boldsymbol{\vec{y}}_{\omega}^H(t)\boldsymbol{w}\big|^2
\label{opt2}
\end{align}
where $(\cdot)^{*}$ denotes the complex conjugate of a complex
number. Problem (\ref{opt2}) can be further compactly written
as
\begin{align}
\textstyle \min_{\boldsymbol{w},\omega} \
\|\boldsymbol{\tilde{s}}-\boldsymbol{A}_{\omega}\boldsymbol{w}\|_2^2
\label{opt3}
\end{align}
where
\begin{align}
&\boldsymbol{\tilde{s}}\triangleq
[s(1)\phantom{0}\ldots\phantom{0}s(T)]^H\in\mathbb{C}^{T}
\\
&\boldsymbol{A}_{\omega}\triangleq
[\boldsymbol{\vec{y}}_{\omega}(1)\phantom{0}\ldots\phantom{0}
\boldsymbol{\vec{y}}_{\omega}(T)]^H\in\mathbb{C}^{T\times NL}
\end{align}

\section{Spatia-Temporal Filter Design}
\label{sec-ST-filter-design}
Designing a spatia-temporal filter amounts to solving problem
(\ref{opt3}). However, due to the coupling between $\omega$
and $\boldsymbol{w}$, it is difficult to solve this problem.
To address this problem, we first fix $\omega$.
Then the least squares solution of $\boldsymbol{w}$ can be
easily obtained as
\begin{align}
\boldsymbol{w}^{\star}
=(\boldsymbol{A}_{\omega}^H\boldsymbol{A}_{\omega})^{+}
\boldsymbol{A}_{\omega}^H\boldsymbol{\tilde{s}} \label{CFO-1}
\end{align}
where $(\boldsymbol{A}_{\omega}^H\boldsymbol{A}_{\omega})^{+}$ is
the pseudo inverse of $\boldsymbol{A}_{\omega}^H
\boldsymbol{A}_{\omega}$. Here we are interested in the
under-determined regime where the length of the preamble sequence is smaller
than the dimension of the filter to be designed (i.e. $T<NL$) and thus
$\boldsymbol{A}_{\omega}^H\boldsymbol{A}_{\omega}$ is
rank-deficient. When $T\geq NL$, the matrix $\boldsymbol{A}_{\omega}^H\boldsymbol{A}_{\omega}$
is very likely to be invertible, in which case we can simply
use $(\boldsymbol{A}_{\omega}^H\boldsymbol{A}_{\omega})^{-1}$.

Substituting $\boldsymbol{w}^{\star}$ back into
problem (\ref{opt3}) yields
\begin{align}
\min\nolimits_{\omega}\quad
f_c(\omega)\triangleq\|\boldsymbol{\tilde{s}}-\boldsymbol{A}_{\omega}
(\boldsymbol{A}_{\omega}^H\boldsymbol{A}_{\omega})^{+}
\boldsymbol{A}_{\omega}^H\boldsymbol{\tilde{s}}\|_2^2 \label{CFO-2}
\end{align}
Problem (\ref{CFO-2}) involves only the optimization of
$\omega$. A one-dimensional search scheme can be employed to
obtain the optimal $\omega^{\star}$. Without loss of generality,
we assume that $\omega^{\star}\in[\delta_{\text{min}},\delta_{\text{max}}]$.
We choose $m$ equidistant points in
$[\delta_{\text{min}},\delta_{\text{max}}]$, denoted as
$\{\omega^i\}_{i=1}^m$, and then compute the value of
$f_c(\omega)$ for each $\omega^i$. The optimal $\omega^{\star}$
is chosen as the one that achieves the smallest value of
$f_c(\omega)$, i.e.
\begin{align}
\omega^{\star}=\arg\min\nolimits_{\omega\in\{\omega^i\}_{i=1}^m} \ f_c(\omega)
\end{align}
Then the optimal filter $\boldsymbol{w}^{\star}=
\{\boldsymbol{w}_l^{\star}\}_{l=0}^{L-1}$ is given as
\begin{align}
\boldsymbol{w}^{\star}=(\boldsymbol{A}_{\omega^{\star}}^H
\boldsymbol{A}_{\omega^{\star}})^{+}\boldsymbol{A}_{\omega^{\star}}^H
\boldsymbol{\tilde{s}} \label{RLS-filter-1}
\end{align}

{\color{blue}We would like to clarify that our proposed method does not
need to explicitly estimate the jamming signals and subtract
them from the received signals. Instead, it uses a spatial-temporal
filter to automatically eliminate the jamming signals $\{i_k(t)\}$
and recover the desired communication signal $s(t)$. As explained
later in this paper, when a certain condition is satisfied,
the optimized filter $\boldsymbol{w}$ will become orthogonal
to the jamming channels $\{\boldsymbol{g}_k\}_{k=1}^{K}$,
i.e. $\boldsymbol{g}_k^{H}\boldsymbol{w}_l=0,\forall k,l$.
Thus, the jamming signals will be automatically removed by
the filter.}

\subsection{Theoretical Analysis}
\label{sec-theoretical-analysis}
In this subsection, we attempt to answer under what conditions the solution
to problem (\ref{opt3}) can completely cancel the jamming signals and recover
the desired signal $s(t)$.

Note that a filter which can successfully recover the desired signal
$s(t)$ has to satisfy the following conditions:
\begin{align}
\omega^{\star}=\delta_{f},  \qquad
\boldsymbol{w}^{\star}\in \mathcal{C}, \label{condition2}
\end{align}
where $\mathcal{C}$ is defined as
\begin{align}
\mathcal{C}\triangleq\{\{\boldsymbol{w}_l\}_{l=0}^{L-1} \
| \ & \boldsymbol{h}^H\boldsymbol{w}_{\tau}=1, \
\boldsymbol{h}^H\boldsymbol{w}_{l}=0, \forall l\neq \tau,
\nonumber\\
&\boldsymbol{g}_k^H\boldsymbol{w}_{l}=0, \forall l \ \forall k \}.
\label{set-C}
\end{align}
To understand condition (\ref{condition2}), note that in the noiseless case, the output of the
CFO-compensated filter can be expressed as
\begin{align}
&\textstyle\sum_{l=0}^{L-1}e^{-j2\pi \omega (t+l)}\boldsymbol{w}_{l}^{H}\boldsymbol{y}(t+l)
\nonumber\\
=&\textstyle\sum_{l=0}^{L-1}e^{-j2\pi \omega (t+l)}
\boldsymbol{w}_{l}^{H}\Big(e^{j2\pi\delta_{f}(t+l)}\boldsymbol{h} s(t+l-\tau)
\nonumber\\
&\textstyle\qquad\qquad\qquad\qquad+
\sum_{k=1}^{K}\boldsymbol{g}_k i_k(t+l)\Big)
\nonumber\\
\stackrel{(a)}=&\textstyle
\sum_{l=0}^{L-1}
\Big(\boldsymbol{w}_{l}^{H}\boldsymbol{h} s(t+l-\tau)
\nonumber\\
&\textstyle \quad\quad+e^{-j2\pi \omega
(t+l)}\sum_{k=1}^{K}\boldsymbol{w}_{l}^{H}\boldsymbol{g}_k
i_k(t+l)\Big)
\stackrel{(b)}= s(t)
\end{align}
where $(a)$ holds when $\omega=\delta_{f}$, and $(b)$ follows when $\boldsymbol{w}\in \mathcal{C}$. From the above we see that, any filter which satisfies
(\ref{condition2}) is able to successfully suppress the
jamming signals and filter out the desired signal. We have the
following result concerning the condition under which any
solution of (\ref{opt3}) satisfies (\ref{condition2}).

\newtheorem{theorem}{Theorem}
\begin{theorem}
\label{theorem-1} Assume that $N\geq K+1$, where $N$ is the
number of antennas at the receiver and $K$ is the number of
jamming signals. Let $\boldsymbol{s}\in\mathbb{C}^{T}$ be a length-$T$
preamble sequence known by the receiver. For the
noiseless case, if $T>(K+1)L$, then any solution of (\ref{opt3}) satisfies (\ref{condition2}).
\end{theorem}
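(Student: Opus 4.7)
The plan is to prove the theorem in two stages: first, show that any minimizer of (\ref{opt3}) must drive the residual to exactly zero; second, extract $\omega^{\star}=\delta_{f}$ and $\boldsymbol{w}^{\star}\in\mathcal{C}$ from that zero-residual equation via a linear-independence argument leveraging A2. For feasibility, note that by A1 the vectors $\boldsymbol{h},\boldsymbol{g}_{1},\ldots,\boldsymbol{g}_{K}$ are linearly independent in $\mathbb{C}^{N}$; together with $N\geq K+1$ this renders the $K+1$ linear constraints per tap that define $\mathcal{C}$ consistent, so $\mathcal{C}\neq\emptyset$. The derivation preceding (\ref{condition2}) shows that any $(\delta_{f},\boldsymbol{w}^{\mathcal{C}})$ with $\boldsymbol{w}^{\mathcal{C}}\in\mathcal{C}$ makes the objective of (\ref{opt3}) vanish, so any minimizer $(\omega^{\star},\boldsymbol{w}^{\star})$ must likewise satisfy $\boldsymbol{A}_{\omega^{\star}}\boldsymbol{w}^{\star}=\tilde{\boldsymbol{s}}$. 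Writing $a_{l}\triangleq\boldsymbol{w}_{l}^{\star H}\boldsymbol{h}$, $b_{k,l}\triangleq\boldsymbol{w}_{l}^{\star H}\boldsymbol{g}_{k}$, and $\Delta\triangleq\delta_{f}-\omega^{\star}$, and substituting (\ref{receive-signal-model}), this zero-residual identity becomes
\begin{align*}
\sum_{l=0}^{L-1}e^{j2\pi\Delta(t+l)}a_{l}s(t+l-\tau)+\sum_{k=1}^{K}\sum_{l=0}^{L-1}e^{-j2\pi\omega^{\star}(t+l)}b_{k,l}i_{k}(t+l)=s(t),
\end{align*}
which must hold for every $t=1,\ldots,T$.

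The core step is a linear-independence claim: for $T>(K+1)L$, the length-$T$ vectors obtained by sampling the pairwise-distinct members of $\{e^{j2\pi\Delta(t+l)}s(t+l-\tau)\}_{l=0}^{L-1}$, $\{e^{-j2\pi\omega^{\star}(t+l)}i_{k}(t+l)\}_{k,l}$, and $s(t)$ at $t=1,\ldots,T$ are linearly independent in $\mathbb{C}^{T}$ almost surely under A2. Granting this, if $\Delta\neq 0$ then $s(t)$ is distinct from all shifted-modulated variants, and moving it to the left side yields a non-trivial null combination of $(K+1)L+1$ independent vectors (coefficient $-1$ on $s(t)$), a contradiction; hence $\omega^{\star}=\delta_{f}$. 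With $\Delta=0$ the $l=\tau$ term coincides with $s(t)$, and the equation rearranges to
\begin{align*}
(a_{\tau}-1)s(t)+\sum_{l\neq\tau}a_{l}s(t+l-\tau)+\sum_{k=1}^{K}\sum_{l=0}^{L-1}e^{-j2\pi\delta_{f}(t+l)}b_{k,l}i_{k}(t+l)=0,
\end{align*}
a null combination of the remaining $(K+1)L$ independent sequences, forcing $a_{\tau}=1$, $a_{l}=0$ for $l\neq\tau$, and $b_{k,l}=0$ for all $k,l$. Because $\boldsymbol{h},\boldsymbol{g}_{1},\ldots,\boldsymbol{g}_{K}$ are linearly independent, these scalar identities are equivalent to the vector conditions defining $\mathcal{C}$, so $\boldsymbol{w}^{\star}\in\mathcal{C}$.

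The principal obstacle I anticipate is making the linear-independence claim rigorous. I would examine the $T\times[(K+1)L+1]$ sample matrix whose columns are the vectors above and argue that any non-trivial null vector would impose a polynomial identity among the absolutely-continuous, mutually-independent random data symbols $\{s(t)\}$ outside the preamble window and the jamming samples $\{i_{k}(t)\}$; such an identity can hold only on a Lebesgue measure-zero set, giving full column rank almost surely. The threshold $T>(K+1)L$ is precisely the minimum height at which a tall matrix of that column count can attain full column rank, which pinpoints why this preamble length is both necessary (to avoid an underdetermined regime that admits spurious minimizers) and sufficient for the conclusion.
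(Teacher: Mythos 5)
Your proposal is correct and follows essentially the same route as the paper: both reduce the zero-residual condition to a linear system in the inner products $\boldsymbol{h}^{H}\boldsymbol{w}_{l}$ and $\boldsymbol{g}_{k}^{H}\boldsymbol{w}_{l}$, invoke almost-sure full column rank of the $T\times\bigl((K+1)L+1\bigr)$ signal-sample matrix under A2, and split into the cases $\omega\neq\delta_{f}$ (no zero-residual solution) and $\omega=\delta_{f}$ (unique solution lying in $\mathcal{C}$). Your explicit feasibility check that $\mathcal{C}\neq\emptyset$ (via A1 and $N\geq K+1$), which guarantees the minimum value is actually $0$, is a small step the paper leaves implicit but not a different approach.
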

\begin{proof}
See Appendix \ref{appA}.
\end{proof}

%The above theorem implies that any solution of (\ref{opt3}) that attains a zero objective function value
%can cancel the jamming signals and recover the desired communication signal.

%It can be readily verified that, if the true parameter
%$\delta_f$ lies on the discretized grid $\{\omega^{i}\}_{i=1}^m$, then our proposed scheme is guaranteed to find a solution
%to achieve the minimum objective function value. According to Theorem \ref{theorem-1}, we arrive at
%the conclusion that the proposed algorithm is able to yield an effective filter for signal recovery.
In Theorem \ref{theorem-1}, we have shown that the condition $T>(K+1)L$ guarantees to find an effective spatial-temporal filter to remove
the jamming signals and recover the desired communication signal.
Remember that the number of jamming signals $K$ is assumed to be
smaller than the number of antennas $N$. As a result, we have $(K+1)L\leq NL$.
This result implies that the length of the preamble sequence is not necessarily greater than
the dimension of the filter, which helps achieve a sample complexity reduction.

\subsection{Efficient Implementations}
In our proposed method, we need to compute the value of
$f_c(\omega)$ for every $\omega^i\in\{\omega^i\}_{i=1}^m$.
For a specific $\omega^i$, the major computational task is to
compute $(\boldsymbol{A}_{\omega^{i}}^H\boldsymbol{A}_{\omega^{i}})^{+}$.
To perform this task in an efficient way,
we alternatively consider calculating $(\epsilon\boldsymbol{I}
+\boldsymbol{A}_{\omega^i}^H\boldsymbol{A}_{\omega^i})^{-1}$,
where $\epsilon$ is set to a small positive value.

Recall that the $t$th row of $\boldsymbol{A}_{\omega^i}$
is $\boldsymbol{\vec{y}}_{\omega^i}^H(t)$, thus
$\epsilon\boldsymbol{I}+\boldsymbol{A}_{\omega^i}^H
\boldsymbol{A}_{\omega^i}$ can be equivalently written as
\begin{align}
\textstyle\epsilon\boldsymbol{I}+\boldsymbol{A}_{\omega^i}^H
\boldsymbol{A}_{\omega^i}=\epsilon\boldsymbol{I}+
\sum_{t=1}^T\boldsymbol{\vec{y}}_{\omega^i}(t)
\boldsymbol{\vec{y}}_{\omega^i}^H(t)
\end{align}
Now define $\boldsymbol{D}_{\omega^i}^0\triangleq\epsilon\boldsymbol{I}$ and
\begin{align}
&\textstyle\boldsymbol{D}_{\omega^i}^{t'}\triangleq\boldsymbol{D}_{\omega^i}^0
+\sum_{t=1}^{t'}\boldsymbol{\vec{y}}_{\omega^i}(t)
\boldsymbol{\vec{y}}_{\omega^i}^H(t), \ 1\leq t'\leq T.
\end{align}
Clearly, we have $\boldsymbol{D}_{\omega^i}^{T}=\epsilon\boldsymbol{I}+\boldsymbol{A}_{\omega^i}^H
\boldsymbol{A}_{\omega^i}$. For $\boldsymbol{D}_{\omega^i}^{t'}$,
we have
\begin{align}
&(\boldsymbol{D}_{\omega^i}^{t'})^{-1}
=\Big(\boldsymbol{D}_{\omega^i}^{t'-1}+\boldsymbol{\vec{y}}_{\omega^i}(t')
\boldsymbol{\vec{y}}_{\omega^i}^H(t')\Big)^{-1}
\nonumber\\
&\overset{(a)}{=}(\boldsymbol{D}_{\omega^i}^{t'-1})^{-1}
-(\boldsymbol{D}_{\omega^i}^{t'-1})^{-1}\boldsymbol{\vec{y}}_{\omega^i}(t')
\nonumber\\
&\Big(1+\boldsymbol{\vec{y}}_{\omega^i}^H(t)
(\boldsymbol{D}_{\omega^i}^{t'-1})^{-1}
\boldsymbol{\vec{y}}_{\omega^i}(t)\Big)^{-1}
\boldsymbol{\vec{y}}_{\omega^i}^H(t)(\boldsymbol{D}_{\omega^i}^{t'-1})^{-1}
\label{effici-imple}
\end{align}
where $(a)$ has invoked the Woodbury identity.
This means that $(\boldsymbol{D}_{\omega^i}^{T})^{-1}
=\epsilon\boldsymbol{I}+\boldsymbol{A}_{\omega^i}^H
\boldsymbol{A}_{\omega^i}$ can be obtained by recursively
performing (\ref{effici-imple}) from $t'=1$ to $t'=T$.
A prominent advantage of performing this recursion is
that this can be implemented in a streaming fashion.
%Moreover, it is also easy to see that multiple $f_c(\omega^i)$s
%can be simultaneously computed whenever the computational
%resources is sufficient.

%namely,
%$(\boldsymbol{D}_{\omega^i}^{t'})^{-1}$ can be computed
%whenever $\boldsymbol{\vec{y}}_{\omega^i}(t')$ is received

The computational complexity
of this recursive method is in the order of $O(mTN^2L^2)$,
where $m$ accounts for the $m$ points in $\{\omega^i\}_{i=1}^m$,
$T$ is the length of the preamble sequence,
and $O(N^2L^2)$ is the complexity for computing the
matrix-vector product $(\boldsymbol{D}_{\omega^i}^{t'-1})^{-1}
\boldsymbol{\vec{y}}_{\omega^i}(t')$ in (\ref{effici-imple}).

\begin{figure*}
  \centering
  \includegraphics[width=16cm,height=2.3cm]{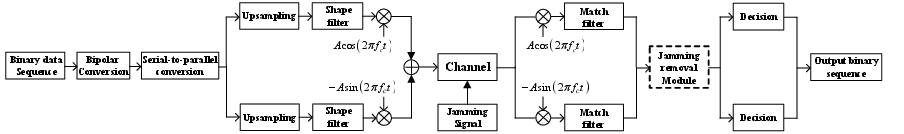}
  \caption{The flowchart of the modulation and signal processing.}
  \label{fig1}
\end{figure*}

%USRP radio platform

\section{Experimental Results}
In this section, we provide experimental results to illustrate
the effectiveness of the proposed method on both simulated
data and the hardware testbed. We compare our proposed method with
the state-of-the-art two-stage spatial filtering algorithms \cite{ZengCao17,PirayeshSangdeh21}
which use a two-stage scheme to cancel the jamming signal in
the presence of timing and CFO offset. Note that the method \cite{PirayeshSangdeh21} requires to know
the channel ratio of the jamming channel, which is assume available to \cite{PirayeshSangdeh21}.
%For simplicity, it is
%assumed that the channel ratio of the jamming channel is known
%to the method \cite{PirayeshSangdeh21}.

\subsection{Simulated Results}
\label{sec-simulated-experiment}
In our simulations, the number of antennas
is set to $N=4$ and the number of jamming signals is set to $K=3$.
The QPSK modulation is employed. The flow chart of the modulation
and the signal processing process is shown in Fig. \ref{fig1}. The
carrier frequency is set to $5$GHz. The shape filter and the
match filter are chosen as the raised cosine finite impulse response
filter, with their length fixed as $49$ and the roll-off factor set
to $0.5$. The CFO between the transmitter and the receiver is set
to $760$Hz. The unknown time offset $\tau$ is randomly generated
within the interval $[0.25\mu s, 2.5\mu s]$ (corresponding to
$[1,10]$ sampling points at the receiver).

The baseband signal is a $0/1$ binary sequence. In QKSP modulation,
the symbol rate is set to $0.5$MB/s, the
upsampling ratio is set to $8$ and the sampling rate at the
receiver is set to $4$MHz. We assume that each frame consists
of $164$ binary bits. The first $T$ bits are used to generate
the preamble sequence while the rest bits are data bits. Due to
the serial-to-parallel conversion as well as the upsampling
operation, each frame has a total of $656$ samples, in which the
number of preamble samples is $4T$. The jamming signals are
randomly generated according to a normal distribution and then
directly added to the received baseband signal. The T2R channel
and the J2R channel are generated as Gaussian random vectors.
The signal-to-jamming ratio (SJR) is defined as
\begin{align}
&\textstyle\text{SJR}\triangleq
\log_{10}\left(\frac{\|\boldsymbol{h}\|_2^2}{\sum_{k=1}^K\|\boldsymbol{g}_k\|_2^2}\right)
\end{align}
For our proposed method, the search range of $\omega$ is set
to $[\delta_{\text{min}},\delta_{\text{max}}]=[0,1000\text{Hz}]$,
and the number of equidistant points is set to $m=200$,
which corresponds to a search interval of $5$Hz. Also, the
order of the filter $L$ is set to $L=12$.

Fig. \ref{fig2}(a) plots the bit error rate (BER) achieved by
respective methods versus the length of the preamble sequence
$T$, where the signal-to-noise ratio (SNR) is set to $5$dB and
the SJR is set to $-30$dB. Results are averaged over $10^3$
Monte Carlo runs. We can see from Fig. \ref{fig2}(a) that our
proposed method outperforms the competing algorithms by a big
margin. Also, by increasing the length of the preamble sequence,
our proposed method is able to achieve a substantial performance
improvement. Fig. \ref{fig2}(b) plots the BER achieved by respective
methods versus the SJR under different SNRs, where we set $T=40$.
Fig. \ref{fig2}(b) shows that our proposed method attains a much
lower BER than its competing algorithms. It is observed that the
BER of the methods \cite{ZengCao17,PirayeshSangdeh21} slightly
increases when the SJR increases from $-30$dB to $-10$dB.
This is because the subspace of the receive covariance
matrix is dominated by both the jamming channel and the communication
channel when the strength of the desired signal is comparable to
that of the jamming signal. Hence the spatial filter chosen as
the minimum eigenvector of the receive covariance matrix
\cite{ZengCao17,PirayeshSangdeh21} has the tendency to suppress
the desired signal, thus leading to a deteriorated
performance.

%Specifically, our proposed method can achieve a
%decent performance even when the jamming signal is $40$dB
%stronger than the legitimate signal.

\begin{figure}[!t]
 \centering
\begin{tabular}{cc}
\hspace*{-3ex}
\includegraphics[width=4.9cm,height=4.3cm]{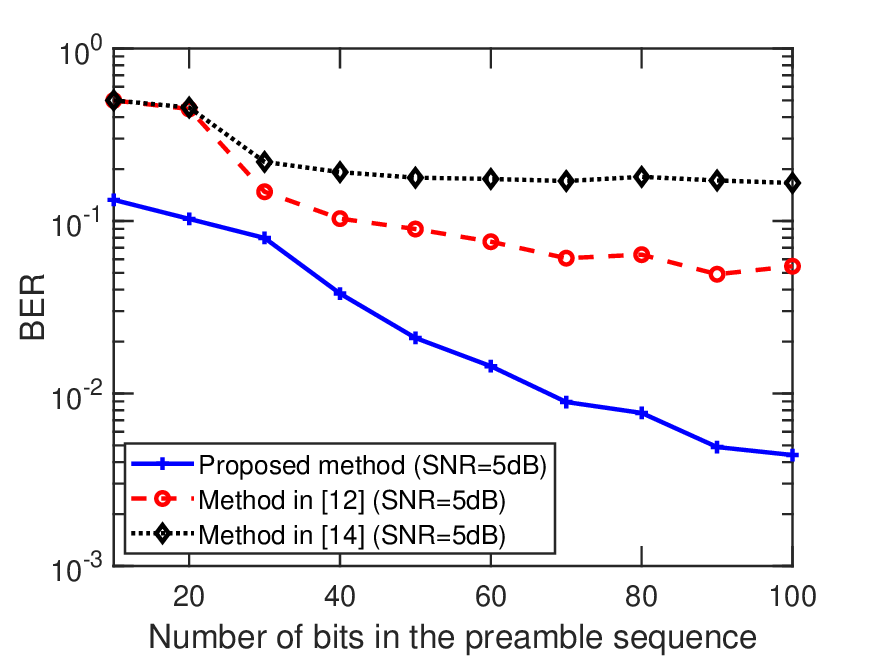}&
\hspace*{-5ex}
\includegraphics[width=4.9cm,height=4.3cm]{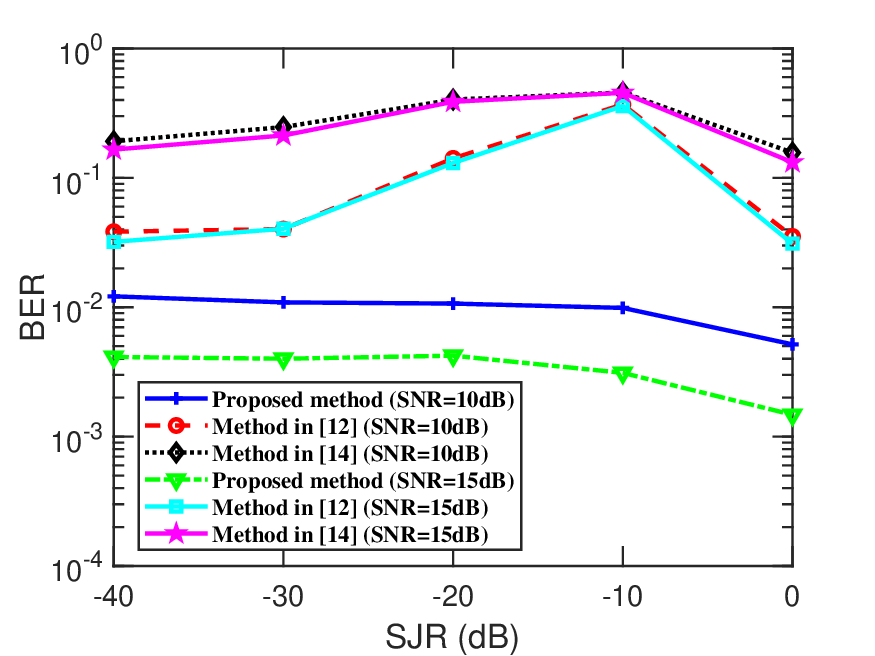}
\\
(a)& (b)
\end{tabular}
  \caption{(a). BER vs. length of the preamble sequence; (b). BER vs. SJR under different SNRs.}
   \label{fig2}
\end{figure}

\begin{figure}
  \centering
  \includegraphics[width=5.5cm,height=3cm]{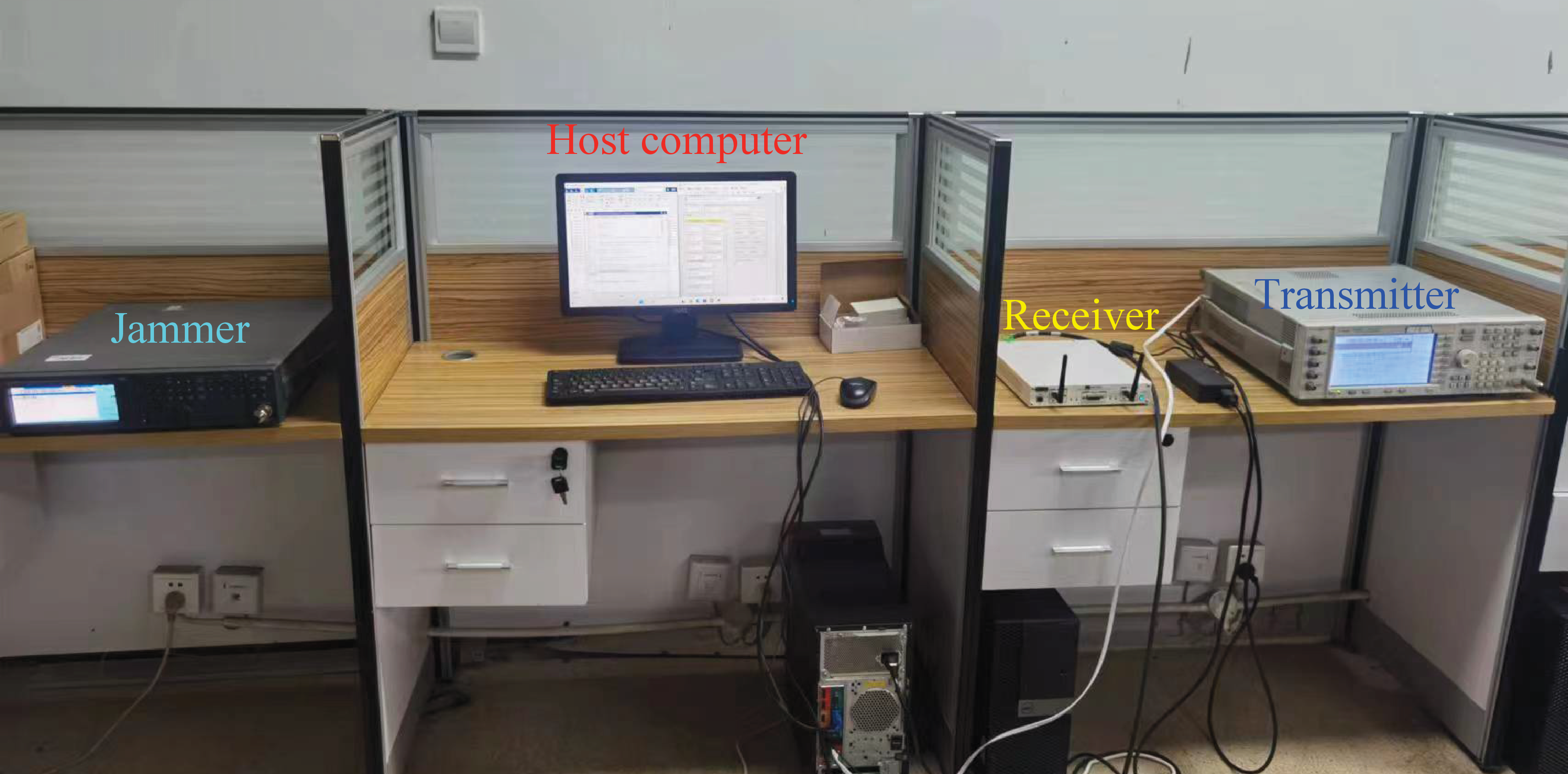}
  \caption{The testbed system. From left to right: Jammer, host computer, receiver, and transmitter.}
  \label{fig4}
\end{figure}

\begin{figure}
  \centering
  \includegraphics[width=5.5cm,height=4cm]{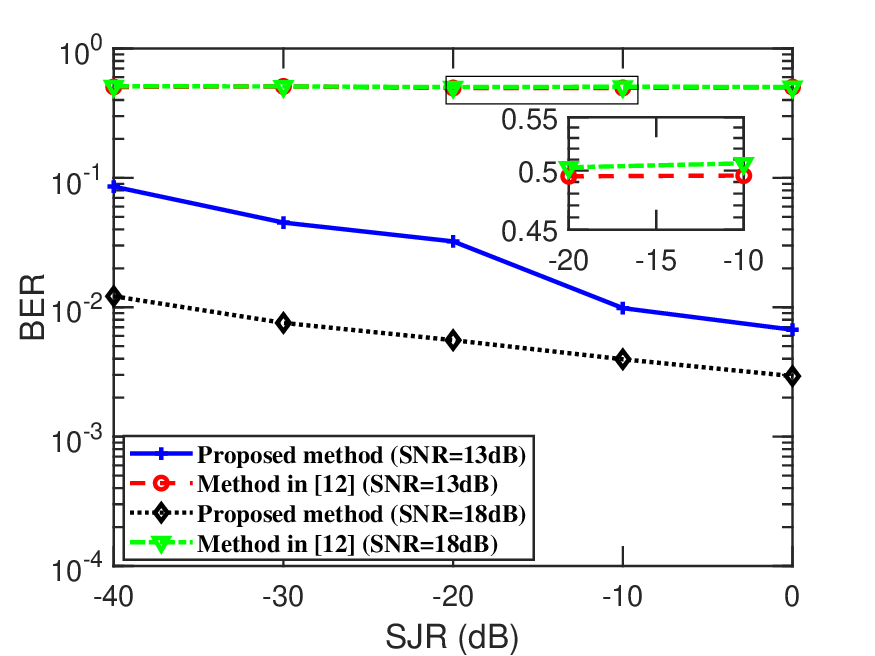}
  \caption{Testbed results: BER vs. SJR under different values of SNR.}
  \label{fig5}
\end{figure}

%Moreover, the filter order is set to $L=30$.

\subsection{Testbed Results}
We also conduct experiments on the USRP radio platform as
illustrated in Fig. \ref{fig4}. In Fig. \ref{fig4}, a vector
signal generator is used as the legitimate transmitter. The
receiver is a two-antenna USRP and an analog signal generator
is used to produce the jamming signal. Settings are the same
as those in the previous subsection, including
the QPSK modulation, carrier frequency, symbol rate, and
shape/matched filter parameters. The length of the preamble sequence is
set to $T=70$ and the frame length is fixed as $164$.
The jamming
signal is a complex exponential signal whose frequency equals
to the carrier frequency. The groundtruth CFO between the
transmitter and the receiver is approximately $760$Hz.
Fig. \ref{fig5} plots the BERs achieved by respective
methods as a function of the SJR. Since the channel ratio
of the jamming channel cannot be obtained in this experiment,
the method \cite{PirayeshSangdeh21} is not included for
comparison. From Fig. \ref{fig5} we see that our proposed
method attains a much lower BER than the two-stage spatial filtering method
\cite{ZengCao17}. Also, it is observed that our proposed method
delivers a decent BER performance even when the jamming signal's power is $40$dB stronger than the legitimate transmitter's power.

\section{Conclusions}
In this paper, we proposed a practical multi-channel-assisted
method for jamming cancellation for AAV communications. The
proposed method only utilizes the transmitter's preamble sequence to simultaneously achieve time/frequency synchronization
as well as jamming cancellation. Experimental results demonstrate the superiority of the proposed
method over state-of-the-art anti-jamming methods.

%The length of the preamble
%sequence required for successful synchronization and signal
%recovery was analyzed.

%\useRomanappendicesfalse
\appendices
\section{Proof of Theorem \ref{theorem-1}} \label{appA}
First we show that if $T>(K+1)L$, then
the objective function in (\ref{opt3}) attains its minimum value 0
only when $\omega=\delta_f$. Consider the following equation:
\begin{align}
\boldsymbol{A}_{\omega}\boldsymbol{w}=\boldsymbol{\tilde{s}}
\label{CFO-proof-1}
\end{align}
The $t'$th ($1\leq t'\leq T$) row of this equation can be written as
\begin{align}
\textstyle s^{*}(t')=&\textstyle \boldsymbol{\vec{y}}^{H}(t')\boldsymbol{w}=
\sum_{l=0}^{L-1}e^{j2\pi \omega (t'+l)}
\boldsymbol{y}^H(t'+l)\boldsymbol{w}_{l}
\nonumber\\
=&\textstyle\sum_{l=0}^{L-1}\big(e^{j2\pi (\omega-\delta_f) (t'+l)}\cdot
\boldsymbol{h}^H\boldsymbol{w}_{l}
\cdot s^{*}(t'+l-\tau)
\nonumber\\
&\textstyle \quad+e^{j2\pi \omega (t'+l)}\sum_{k=1}^{K}
\boldsymbol{g}_k^H\boldsymbol{w}_{l}
\cdot i_k^{*}(t'+l)\big) \label{CFO-proof-3}
\end{align}
Define $\boldsymbol{\psi}_{l}\triangleq [\psi_{l,0} \ \cdots \
\psi_{l,K}]^T$, where $\psi_{l,0}\triangleq\boldsymbol{h}^H
\boldsymbol{w}_{l}$ and $\psi_{l,k}\triangleq\boldsymbol{g}_k^H
\boldsymbol{w}_{l}$. The above set of equations can be compactly
written as
\begin{align}
\underbrace{\begin{bmatrix} \boldsymbol{Q}_0  &
\cdots & \boldsymbol{Q}_{L-1}
\end{bmatrix}}_{\triangleq \boldsymbol{Q}}
\underbrace{\begin{bmatrix}
\boldsymbol{\psi}_0 \\
\vdots \\
\boldsymbol{\psi}_{L-1}
\end{bmatrix}}_{\triangleq\boldsymbol{\psi}}=\boldsymbol{\tilde{s}},
\label{CFO-proof-4}
\end{align}
where $\boldsymbol{Q}$ is a $T\times (K+1)L$ matrix, and
\begin{align}
&\boldsymbol{Q}_{l}\triangleq
\begin{bmatrix}
\tilde{s}(1+l-\tau)    &   \tilde{i}_1(1+l)   &   \cdots    &  \tilde{i}_K(1+l)       \\
%\tilde{s}^{H}(2+l-\tau)    &   \tilde{i}_1^{H}(2+l)   &   \cdots    &  \tilde{i}_K^{H}(2+l)       \\
\vdots                     &   \vdots               &   \vdots    &  \vdots                   \\
\tilde{s}(T+l-\tau)    &   \tilde{i}_1(T+l)   &   \cdots    &  \tilde{i}_K(T+l)       \\
\end{bmatrix}
\end{align}
in which
\begin{align}
&\tilde{s}(t'+l-\tau)\triangleq e^{j2\pi
(\omega-\delta_f) (t'+l)} s^{*}(t'+l-\tau),
\nonumber\\
&\tilde{i}_k(t'+l)\triangleq e^{j2\pi \omega (t'+l)}
i_k^{*}(t'+l). \label{CFO-proof-5}
\end{align}
Also note that (\ref{CFO-proof-4}) can be more compactly written as
\begin{align}
\begin{bmatrix}
\boldsymbol{\tilde{s}} & \boldsymbol{Q}
\end{bmatrix}
\begin{bmatrix}
-1 \\
\boldsymbol{\psi}
\end{bmatrix}=\boldsymbol{0} \label{eqn1}
\end{align}
Recall that $s(t)$ and $\{i_k(t)\}_{k=1}^K$ are statistically
independent of each other, and each signal (including $s(t)$ and $\{i_k(t)\}_{k=1}^K$) is
a random process. For the case of $\omega\neq\delta_f$,
the matrix $[\boldsymbol{\tilde{s}} \phantom{0}\boldsymbol{Q}]$ has a
full column rank almost surely when $T>(K+1)L$. Hence there
does not exist a nonzero solution to satisfy (\ref{eqn1}).
Consequently, we cannot find a solution $\boldsymbol{w}$ to
satisfy (\ref{CFO-proof-1}), and the objective function in
(\ref{opt3}) cannot attain 0 when $\omega\neq\delta_f$.

Next, we show that if $T> (K+1)L$ and $\omega=\delta_f$, then the
solution $\boldsymbol{w}$ to (\ref{CFO-proof-1}) always belongs to
the set $\mathcal{C}$ defined in (\ref{set-C}). When
$\omega=\delta_f$, we need to examine the solution
$\boldsymbol{w}$ to the following equation:
\begin{align}
\boldsymbol{A}_{\delta_f}\boldsymbol{w}-\boldsymbol{\tilde{s}}=
\boldsymbol{0} \label{RLS-analysis-2}
\end{align}
Similar to (\ref{CFO-proof-3}) and (\ref{CFO-proof-4}), the above equation can be equivalently written as
\begin{align}
\underbrace{\begin{bmatrix} \boldsymbol{T}_0 &
\cdots & \boldsymbol{T}_{L-1}
\end{bmatrix}}_{\triangleq \boldsymbol{T}}
\underbrace{\begin{bmatrix}
\boldsymbol{\psi}_0 \\
\vdots \\
\boldsymbol{\psi}_{L-1}
\end{bmatrix}}_{\triangleq\boldsymbol{\psi}}=\boldsymbol{\tilde{s}},
\label{RLS-analysis-7}
\end{align}
where
\begin{align}
&\boldsymbol{T}_{l}\triangleq
\begin{bmatrix}
s^{*}(1+l-\tau)    &   \tilde{i}_1(1+l)   &   \cdots    &  \tilde{i}_K(1+l)       \\
%s^{H}(2+l-\tau)    &   \tilde{i}_1^{H}(2+l)   &   \cdots    &  \tilde{i}_K^{H}(2+l)       \\
\vdots             &   \vdots                 &   \vdots    &  \vdots                     \\
s^{*}(T+l-\tau)    &   \tilde{i}_1(T+l)   &   \cdots    &  \tilde{i}_K(T+l)       \\
\end{bmatrix}
\label{RLS-analysis-8}
\end{align}
in which $\tilde{i}_k(t'+l)\triangleq e^{j2\pi \delta_f (t'+l)}
i_k^{*}(t'+l)$.

Since signals $\{s(t),i_k(t),k=1,\ldots,K\}$ are statistically
independent of each other and each signal is a random process, the matrix $\boldsymbol{T}\in\mathbb{C}^{T\times
(K+1)L}$ is full column rank with probability one when $T\geq
(K+1)L$. In this case, (\ref{RLS-analysis-7}) admits a unique
solution, and it can be readily verified that this unique solution
is given by
\begin{align}
\psi_{l,0}=\begin{cases} 1& l=\tau \\ 0&
\text{otherwise}\end{cases}, \text{and} \
\psi_{l,k}= 0, \ \forall k=1,\ldots, K
\end{align}
It is clear that this unique solution corresponds to the solution
$\boldsymbol{w}$ which belongs to the set $\mathcal{C}$ defined in
(\ref{set-C}). The proof is completed here.

\bibliography{newbib}
\bibliographystyle{IEEEtran}

\end{document}